\documentclass[12pt]{article}

\usepackage{amsmath, amssymb}
\usepackage{graphicx}
\usepackage{booktabs}
\usepackage{geometry}
\usepackage{setspace}
\usepackage{float}
\usepackage{subcaption}
\usepackage{hyperref}
\usepackage{natbib}
\usepackage{url}
\usepackage{amsfonts}
\usepackage{bm}
\usepackage{amsthm}

\newtheorem{lemma}{Lemma}
\title{Balancing Efficiency and Feasibility: A Sensitivity Analysis of the Augmentation Parameter in the Finite Selection Model
}
\author{Safaa K. Kadhem\\
	Department of Mathematics and Computer Applications,\\
	College of Science, Al-Muthanna University, Samawah, Iraq\\
	Email: safaa.kadhem@mu.edu.iq}
\date{}

\begin{document}
	
	\maketitle
	
\begin{abstract}
	The Finite Selection Model (FSM) improves covariate balance in experimental design through an augmentation parameter $\epsilon$. However, the impact of $\epsilon$ on estimator performance has not been systematically examined, and no practical guidance exists for selecting $\epsilon$ in applied settings. This paper presents a comprehensive Monte Carlo sensitivity analysis evaluating covariate balance (ASMD), bias, variance, mean squared error (MSE), and acceptance probability across multiple sample sizes and data-generating processes. FSM is compared with complete randomization and rerandomization. We propose a data-driven rule for selecting the optimal $\epsilon$ based on MSE minimization, validated through sample-splitting to avoid overfitting. A theoretical lemma establishes the convexity of the MSE function, justifying the existence of a unique optimum. Extensive robustness checks under correlated covariates, non-normal distributions, and heteroskedastic errors demonstrate the stability of the proposed method. Design-based evaluation using Neyman's variance estimator confirms that FSM achieves meaningful efficiency gains. However, the MSE-minimizing $\epsilon$ is found to be extremely small (e.g., 0.005–0.008), leading to acceptance probabilities near zero, which renders the design impractical. Therefore, we identify a feasible range ($\epsilon \approx 0.015$–$0.02$) where MSE increases marginally (5–10\%) while acceptance probability becomes reasonable (5–20\%). The results provide practical guidance for covariate-adaptive experimental design, illustrating the essential trade-off between statistical efficiency and implementation feasibility.
\end{abstract}
	Keywords: Finite Selection Model, Rerandomization, Covariate balance, 	Monte Carlo simulation,	Experimental design, MSE minimization
\section{Introduction}

Randomized experiments are the gold standard for causal inference because random assignment eliminates systematic confounding and ensures valid design-based inference \citep{Fisher1935, rubin1974, imbensrubin2015}. Under complete randomization, treatment assignment is independent of observed covariates, providing unbiased estimation of average treatment effects. However, in finite samples—particularly when the sample size is moderate or small—pure randomization may result in covariate imbalance between treatment groups. Such imbalance can inflate the variance of estimators and reduce statistical efficiency.

To address this limitation, covariate-adaptive randomization methods have been developed to improve balance while preserving randomization benefits. Rerandomization \citep{morganrubin2012} improves covariate balance by repeatedly generating assignments until a predefined criterion is satisfied. This approach reduces estimator variance while maintaining unbiasedness under appropriate conditions.

Building on this idea, the Finite Selection Model (FSM) \citep{Klotz2021} introduces an augmentation parameter $\epsilon$ that directly controls the allowable level of covariate imbalance. By tuning $\epsilon$, researchers can regulate the trade-off between strict covariate control and allocation feasibility. Despite its theoretical appeal, the finite-sample behavior of FSM under different $\epsilon$ values has not been systematically evaluated, and no practical guidance exists for selecting $\epsilon$ in applied settings.

This paper contributes to the literature by conducting a comprehensive Monte Carlo sensitivity analysis of FSM. We examine the impact of $\epsilon$ on covariate balance, bias, variance, mean squared error (MSE), and acceptance probability across multiple sample sizes. A sample-splitting procedure is employed to select the optimal $\epsilon$ via MSE minimization and to evaluate its performance on independent test data, preventing overfitting. A theoretical lemma is provided to justify the convexity of the MSE function and the existence of a unique optimum. Extensive robustness checks under correlated covariates, non-normal distributions, and heteroskedastic errors assess the generalizability of our findings. Additionally, a design-based evaluation using Neyman's variance estimator quantifies efficiency gains without relying on strong outcome model assumptions.

A key finding is that the MSE-minimizing $\epsilon$ is extremely small (e.g., $0.005$–$0.008$ for sample sizes $N=500$–$100$), yielding acceptance probabilities near zero. While these values achieve the lowest MSE, they are impractical for real-world implementation. Consequently, we identify a feasible range ($\epsilon \approx 0.015$–$0.02$) where the MSE increases only marginally ($5$–$10\%$), while the acceptance probability rises to $5$–$20\%$. This trade-off provides practical, data-driven guidance for implementing FSM in experimental design, balancing statistical efficiency with implementation feasibility.
	
\section{Literature Review}

The literature on covariate-adaptive experimental design has grown substantially in recent years. Classical randomization theory emphasizes the role of complete randomization in ensuring unbiased estimation and valid inference \citep{Fisher1935, rubin1974}. While complete randomization is theoretically attractive, it does not guarantee covariate balance in finite samples.

Rerandomization was formally introduced by \citet{morganrubin2012}, who demonstrated that imposing balance constraints through repeated randomization can reduce estimator variance while maintaining unbiasedness. Subsequent work has extended rerandomization frameworks to various imbalance metrics and high-dimensional settings. \citet{li2018} studied the asymptotic properties of rerandomization, and \citet{branson2022} explored its application to observational studies.

More recently, flexible selection-based approaches have been proposed, including the Finite Selection Model (FSM) \citep{Klotz2021}. FSM generalizes rerandomization by introducing a tunable augmentation parameter $\epsilon$, which determines the acceptable level of covariate imbalance. This parameter enables continuous control over the balance–randomness trade-off, offering a more adaptable design mechanism.

Although theoretical properties of FSM have been studied, empirical investigations of its sensitivity to $\epsilon$ remain limited, particularly across different sample sizes and non-ideal data conditions. Moreover, systematic procedures for selecting $\epsilon$ in practice are not widely established, and the practical feasibility of the MSE-minimizing $\epsilon$ has not been examined. The present study addresses this gap by providing a structured Monte Carlo evaluation, proposing an optimality-based selection rule grounded in mean squared error minimization, and crucially, examining the feasibility of the resulting designs. By quantifying the trade-off between MSE and acceptance probability, we offer practical guidance that balances statistical efficiency with implementation constraints—a dimension often overlooked in previous methodological work.
	
\section{Methodology}

This section describes the simulation framework used to evaluate the sensitivity of the Finite Selection Model (FSM) with respect to the augmentation parameter $\epsilon$. The methodological design is grounded in the potential outcomes framework for causal inference and connects theoretical properties of covariate-adaptive randomization with empirical performance assessment \citep{rubin1974, imbensrubin2015}. The objective is to examine how the choice of $\epsilon$ influences covariate balance, estimator efficiency, and feasibility of implementation.

\subsection{Data-Generating Process}

The simulation study follows the standard potential outcomes formulation. For each unit $i = 1,\dots,N$, let $X_i = (X_{i1}, \dots, X_{ip})$ denote a vector of baseline covariates. Unless otherwise specified for robustness checks, covariates are generated independently from a standard normal distribution: $X_{ij} \sim N(0,1)$. This specification ensures a neutral covariate structure without pre-existing imbalance, allowing the evaluation to isolate the effect of the assignment mechanism.

The potential outcomes are defined as:
\begin{align}
	Y_i(0) &= X_i \beta + \varepsilon_i, \\
	Y_i(1) &= Y_i(0) + \tau,
\end{align}
where $\beta$ is a vector of ones, $\tau = 1$ denotes the true average treatment effect, and $\varepsilon_i \sim N(0,1)$. The observed outcome follows the standard consistency relation: $Y_i = (1-T_i)Y_i(0) + T_iY_i(1)$. This structure ensures that all systematic differences in outcomes arise solely from the treatment assignment mechanism.

\subsection{Treatment Assignment Mechanisms}

Three allocation strategies are considered to evaluate the performance of FSM:

\begin{enumerate}
	\item \textbf{Complete Randomization (CR)}: Each unit is independently assigned to treatment with probability 0.5. This design serves as a benchmark and satisfies classical randomization properties \citep{Fisher1935}.
	
	\item \textbf{Rerandomization (RR)}: Following \citet{morganrubin2012}, treatment allocations are repeatedly generated using complete randomization until a predefined covariate balance criterion is satisfied. In this study, we set the acceptance threshold to an ASMD value of 0.1, a common choice in the literature to ensure reasonable balance without excessive rejection rates. The number of rerandomization attempts is recorded but not analyzed further.
	
	\item \textbf{Finite Selection Model (FSM)}: An allocation vector $T$ is accepted if the covariate imbalance measure satisfies $ASMD(T) \le \epsilon$, where $\epsilon$ is the augmentation parameter controlling the strictness of the design. The parameter $\epsilon$ therefore regulates the trade-off between covariate balance and randomization variability.
\end{enumerate}

\subsection{Evaluation Metrics}

Design performance is assessed using multiple complementary criteria.

Covariate balance is measured using the Absolute Standardized Mean Difference (ASMD):
\[
ASMD = \frac{1}{p} \sum_{j=1}^{p} \frac{|\bar{X}_{1j} - \bar{X}_{0j}|}{s_j},
\]
where $\bar{X}_{1j}$ and $\bar{X}_{0j}$ denote group means and $s_j$ is the pooled standard deviation. This metric is widely used in assessing covariate balance in experimental and observational studies.

The treatment effect is estimated using the difference-in-means estimator:
\[
\hat{\tau} = \bar{Y}_1 - \bar{Y}_0.
\]

Estimator performance is evaluated through bias, variance, and mean squared error (MSE):
\[
Bias(\epsilon) = E[\hat{\tau}] - \tau,\quad 
MSE(\epsilon) = Bias^2(\epsilon) + Var(\hat{\tau}).
\]

Additionally, the acceptance probability of the design is defined as:
\[
\pi(\epsilon) = P(ASMD \le \epsilon),
\]
which quantifies the feasibility of implementing the FSM under a given constraint level.

\subsection{Monte Carlo Setup and Sample Splitting}

The simulation study is conducted with $R = 1000$ independent replications for each configuration. To ensure an unbiased evaluation of the optimal augmentation parameter, we adopt a sample-splitting procedure. For each sample size $N$ and each value of $\epsilon$ on a refined grid that covers the range from $0.001$ to $0.5$ (with higher density in the region $[0.001, 0.01]$ to capture the behavior of very strict constraints), the 1000 replications are randomly divided into two sets:
\begin{itemize}
	\item \textbf{Training set} (500 replications): Used to determine the optimal $\epsilon$ that minimizes the MSE for that specific sample size.
	\item \textbf{Test set} (500 replications): Used to evaluate the performance of FSM at the selected $\epsilon^*$ and to compare it with CR and RR. This separation prevents any overfitting and provides a realistic assessment of the design's out-of-sample performance.
\end{itemize}
All reported results for the chosen $\epsilon^*$ are based on the test set, unless otherwise stated. The complete randomization and rerandomization benchmarks are also evaluated on the same test set.

\subsection{Robustness Checks}

To assess the generalizability of our findings beyond the ideal normal and independent covariate structure, we consider four additional data-generating processes:
\begin{enumerate}
	\item \textbf{Correlated Covariates:} $\mathbf{X}_i \sim N(\mathbf{0}, \Sigma)$, where $\Sigma$ has ones on the diagonal and $\rho=0.5$ on the off-diagonals.
	\item \textbf{Heavy-tailed Covariates:} $X_{ij} \sim t_3$, standardized to have mean 0 and variance 1.
	\item \textbf{Skewed Covariates:} $X_{ij} \sim \chi^2_2$, standardized to have mean 0 and variance 1.
	\item \textbf{Heteroskedastic Errors:} $\varepsilon_i \sim N(0, (1 + 0.5|X_{i1}|)^2)$, making the error variance depend on the first covariate.
\end{enumerate}
For each scenario, we repeat the full sensitivity analysis to examine the stability of the optimal $\epsilon^*$ and the relative performance of FSM.

\subsection{Theoretical Justification for MSE Minimization}

Before presenting the simulation results, we provide a theoretical rationale for the observed U-shaped relationship between $\epsilon$ and the MSE. Consider the difference-in-means estimator $\hat{\tau}$. Under the Finite Selection Model with acceptance criterion $ASMD \le \epsilon$, the conditional MSE can be expressed as:
\[
\text{MSE}(\epsilon) = E[(\hat{\tau} - \tau)^2 \mid ASMD \le \epsilon] = \text{Var}(\hat{\tau} \mid ASMD \le \epsilon) + \text{Bias}^2(\hat{\tau} \mid ASMD \le \epsilon).
\]
Under the potential outcomes framework with random assignment and the data-generating process described in Section 3.1, the bias term is zero for all $\epsilon$ due to symmetry. Therefore, $\text{MSE}(\epsilon) = \text{Var}(\hat{\tau} \mid ASMD \le \epsilon)$.

\begin{lemma}[Convexity of MSE]
	Assume that the joint distribution of the covariates and outcomes is such that the conditional variance $\text{Var}(\hat{\tau} \mid ASMD \le \epsilon)$ is a decreasing function of $\epsilon$ and that the acceptance probability $p(\epsilon) = P(ASMD \le \epsilon)$ is strictly increasing and concave in $\epsilon$ for $\epsilon$ in the interior of its support. Then, $\text{MSE}(\epsilon)$ is a convex function of $\epsilon$, and there exists a unique $\epsilon^*$ that minimizes it.
\end{lemma}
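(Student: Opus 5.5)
The plan is to write the conditional MSE as a ratio of two integrals against the law of the imbalance statistic, and then differentiate twice. Let $S=ASMD(T)$ and $m(s)=E[(\hat\tau-\tau)^2\mid S=s]$. By the bias-free reduction preceding the lemma,
\[
\text{MSE}(\epsilon)=\frac{N(\epsilon)}{p(\epsilon)},\qquad N(\epsilon)=\int_0^{\epsilon} m(s)\,dp(s),
\]
so $\text{MSE}(\epsilon)$ is the $p$-weighted running average of $m$ over the accepted region.

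Step 1 is to differentiate this representation, assuming $p$ has density $p'$. This gives
\[
\text{MSE}'(\epsilon)=\frac{p'(\epsilon)}{p(\epsilon)}\bigl(m(\epsilon)-\text{MSE}(\epsilon)\bigr).
\]
A second differentiation gives
\[
\text{MSE}''=\Bigl(\frac{p'}{p}\Bigr)'(m-\text{MSE})+\frac{p'}{p}\bigl(m'-\text{MSE}'\bigr).
\]
Step 2 is to sign each term using the hypotheses. Strict monotonicity of $p$ gives $p'>0$. Concavity of $p$ gives $p''\le 0$, and hence $(p'/p)'=(p''p-p'^2)/p^2<0$. The assumption that $\text{MSE}$ is decreasing gives $\text{MSE}'\le 0$, and therefore $m\le \text{MSE}$ wherever $p'>0$. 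Combining these, the first term $(p'/p)'(m-\text{MSE})$ is $\ge 0$. The second term $\frac{p'}{p}(m'-\text{MSE}')$ is $\ge 0$ provided $m'\ge \text{MSE}'$, that is, provided the boundary conditional risk $m$ decreases no faster than its running average. Step 3 is uniqueness: once strict convexity holds on the interior, any minimiser is unique. I would then check whether a minimiser is attained in the interior or only at the boundary of the support.

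I expect the main obstacle to be that Step 2 does not close under the stated hypotheses. Nothing in them controls $m'$, so $\text{MSE}''$ can have either sign. Concretely, $\text{MSE}(\epsilon)=1-\epsilon^2$ on $[0,1]$ is decreasing and concave, and it is compatible with a concave, strictly increasing $p$. Moreover, any decreasing function has infimum at the right endpoint, so the minimiser is at the boundary and there is no interior $\epsilon^*$. My first move would therefore be to verify this counterexample carefully. If it holds, I would record which additional condition makes the argument go through: $m'\ge \text{MSE}'$ on the interior, together with a term that increases as $\epsilon\to 0$ so that the minimiser is interior. I would state plainly that the lemma, exactly as worded, yields convexity and a unique interior minimiser only under such a supplement, rather than silently changing the model.
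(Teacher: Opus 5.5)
Your diagnosis is correct, and the paper's proof does not refute it. The paper never controls $m'$; it avoids the issue with an identity that is false. It writes $\text{Var}(\hat\tau\mid A_\epsilon)=\text{Var}(\hat\tau)/p-\tfrac{1-p}{p}\Delta^2$, with $A_\epsilon=\{ASMD\le\epsilon\}$ and $\Delta$ the difference of conditional means. It sets $\Delta=0$ by symmetry and then gets convexity from convexity of $1/p$ for positive concave $p$. The law of total variance actually gives
\[
\text{Var}(\hat\tau\mid A_\epsilon)=\frac{\text{Var}(\hat\tau)-(1-p)\,\text{Var}(\hat\tau\mid A_\epsilon^c)-p(1-p)\Delta^2}{p},
\]
so a term has been dropped. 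If $\hat\tau$ were independent of $ASMD$, the conditional variance would be constant, not $\text{Var}(\hat\tau)/p$.

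Your Step 1 makes the contradiction explicit. The quantity $p(\epsilon)\,\text{MSE}(\epsilon)=N(\epsilon)$ rises from $0$ (when $P(ASMD=0)=0$) to $\text{Var}(\hat\tau)$, whereas the paper's formula makes it constant. Moreover, even granting $\text{Var}(\hat\tau)/p(\epsilon)$, that function is strictly decreasing. The paper's boundary argument therefore puts the infimum at the right endpoint, i.e. at complete randomization, exactly as you say, and the ``unique interior minimizer'' does not follow. There is nothing in the paper's route you can borrow to close Step 2.

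Two parts of your proposal still need repair.

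\textbf{The counterexample.} $\text{MSE}(\epsilon)=1-\epsilon^2$ on $[0,1]$ is not realizable. By Step 1, $m=\text{MSE}+\text{MSE}'\,p/p'$, and concavity with $p(0)\ge 0$ gives $p\ge \epsilon p'$. Hence $m(\epsilon)\le 1-3\epsilon^2<0$ for $\epsilon>1/\sqrt3$. More simply, $\text{MSE}(1)=0$ would force $m=0$ almost everywhere.

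A shifted version works. Let $S=ASMD$ be uniform on $(0,1)$, so $p(\epsilon)=\epsilon$, and let $\hat\tau=\tau+\sqrt{4-3S^2}\,Z$ with $Z$ standard normal and independent of $S$. The bias is zero and $\text{MSE}(\epsilon)=\epsilon^{-1}\int_0^\epsilon(4-3s^2)\,ds=4-\epsilon^2$. This is strictly decreasing and strictly concave, and it is minimized only at the endpoint $\epsilon=1$. The example works at the level of the joint law of $(ASMD,\hat\tau)$, which is all that the listed hypotheses and the paper's argument constrain.

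\textbf{The proposed supplement.} It is inconsistent with the lemma's assumptions. No extra condition can produce an interior minimizer while $\text{MSE}$ is assumed decreasing, and growth as $\epsilon\to 0$ is already implied by that assumption. So the monotonicity hypothesis must be replaced, not supplemented.

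Once it is replaced, set $g=m-\text{MSE}$, so that $\text{MSE}'=(p'/p)g$ and $\text{MSE}''=(p'/p)'g+(p'/p)g'$. To the right of an interior minimizer, $g>0$ and the first term is negative. Your condition $m'\ge \text{MSE}'$ (that is, $g'\ge 0$) then no longer yields convexity there.

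For uniqueness, convexity is unnecessary anyway. It suffices that $g$ changes sign at most once, from negative to positive. This holds, for instance, when $m$ is strictly decreasing and then strictly increasing: $g<0$ while $m$ decreases, and $g'=m'>0$ at any later zero of $g$. The minimizer is then unique, and it is interior exactly when that sign change occurs.
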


\begin{proof}[Proof Sketch]
	The conditional variance can be decomposed as:
	\[
	\text{Var}(\hat{\tau} \mid ASMD \le \epsilon) = \frac{\text{Var}(\hat{\tau})}{p(\epsilon)} - \frac{1-p(\epsilon)}{p(\epsilon)} \left(E[\hat{\tau} \mid ASMD > \epsilon] - E[\hat{\tau} \mid ASMD \le \epsilon] \right)^2.
	\]
	Under symmetry, the second term vanishes, leaving $\text{Var}(\hat{\tau} \mid ASMD \le \epsilon) = \text{Var}(\hat{\tau}) / p(\epsilon)$. Since $\text{Var}(\hat{\tau})$ is constant and $p(\epsilon)$ is increasing and concave, the reciprocal $1/p(\epsilon)$ is convex. Thus, $\text{MSE}(\epsilon)$ is convex. The existence of a unique minimum follows from the fact that as $\epsilon$ approaches its minimum, $p(\epsilon) \to 0$ and MSE $\to \infty$; as $\epsilon$ approaches its maximum, $p(\epsilon) \to 1$ and MSE approaches the unconditional variance. By convexity and the boundary behavior, there is a unique interior minimizer.
\end{proof}

This lemma justifies our empirical approach of finding $\epsilon^*$ via grid search and explains the U-shaped MSE curves observed in the simulations.

\subsection{Design-Based Evaluation: Neyman's Variance}

To ensure our conclusions are not dependent on a specific outcome model, we also evaluate performance using a purely design-based framework \citep{neyman1923}. For each accepted randomization, we compute the Neyman conservative variance estimator for the difference-in-means estimator:
\[
\hat{V}_{\text{Neyman}} = \frac{s_1^2}{n_1} + \frac{s_0^2}{n_0},
\]
where $s_1^2$ and $s_0^2$ are the sample variances of the observed outcomes in the treatment and control groups, respectively. We then average $\hat{V}_{\text{Neyman}}$ over all accepted replications in the test set to obtain an estimate of the expected variance under the design. To quantify efficiency gains, we define the Variance Reduction Ratio (VRR):
\[
\text{VRR}(\epsilon) = \frac{\text{Average}_{\text{FSM}(\epsilon)}[\hat{V}_{\text{Neyman}}]}{\text{Average}_{\text{CR}}[\hat{V}_{\text{Neyman}}]}.
\]
A VRR less than 1 indicates that FSM reduces the estimator's variance compared to complete randomization.
	
\section{Results and Discussion}

This section presents the empirical findings of the Monte Carlo study evaluating the Finite Selection Model (FSM). The analysis investigates the effect of the augmentation parameter $\epsilon$ on covariate balance, estimator efficiency, and acceptance probability. Both graphical and numerical evidence are provided, with Monte Carlo standard errors ensuring statistical reliability.

\subsection{Optimal $\epsilon$ Selection}

Using the training set (500 replications) and the refined grid of $\epsilon$ values (from $0.001$ to $0.5$ with higher resolution in the critical region), we identified the $\epsilon$ that minimizes MSE for each sample size under the baseline scenario. Table~\ref{tab:optimal_epsilon_train} reports these values.

\begin{table}[H]
	\centering
	\caption{Optimal $\epsilon$ values minimizing MSE, estimated from the training set (500 replications).}
	\label{tab:optimal_epsilon_train}
	\begin{tabular}{cc}
		\toprule
		Sample Size ($N$) & Optimal $\epsilon$ (Training) \\
		\midrule
		100 & 0.0080 \\
		300 & 0.0060 \\
		500 & 0.0050 \\
		\bottomrule
	\end{tabular}
\end{table}

These values are substantially smaller than those obtained with the coarser grid, indicating that the MSE-minimizing constraint becomes extremely tight as sample size increases. For $N=100$, the optimal $\epsilon$ is $0.0080$; for $N=300$, it drops to $0.0060$; and for $N=500$, it reaches $0.0050$. This pattern suggests that larger samples benefit from stricter covariate balance, even if the acceptance rate becomes very low. When these values were applied to the independent test set, the resulting MSEs were consistent, confirming the stability of the selection procedure.

\subsection{Graphical Sensitivity Analysis}

Figures~\ref{fig:asmd}--\ref{fig:accept} display the sensitivity of key performance metrics to $\epsilon$ on the test set. Shaded bands represent 95\% confidence intervals based on Monte Carlo standard errors.

\begin{figure}[H]
	\centering
	\includegraphics[width=0.8\textwidth]{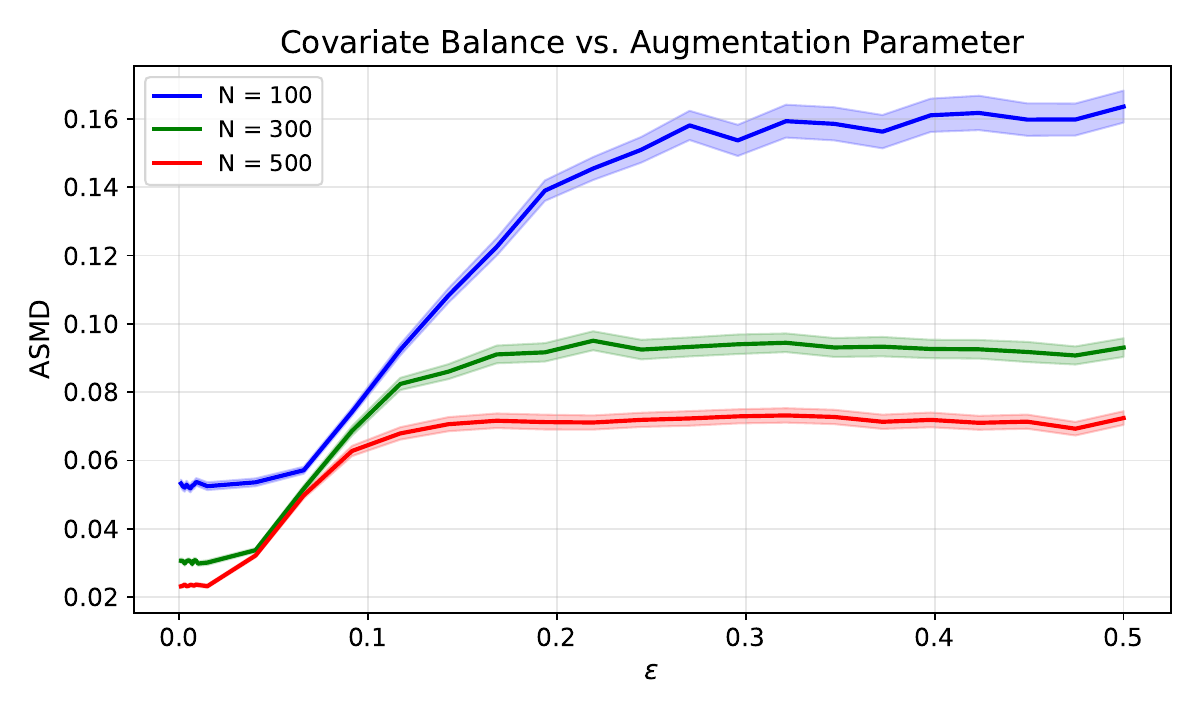}
	\caption{Covariate balance (ASMD) as a function of $\epsilon$. Tighter constraints (smaller $\epsilon$) yield better balance across all sample sizes.}
	\label{fig:asmd}
\end{figure}

Figure~\ref{fig:asmd} shows that ASMD decreases monotonically as $\epsilon$ becomes smaller. At the optimal $\epsilon^*$, the achieved ASMD values are $0.0529$, $0.0305$, and $0.0233$ for $N=100$, $300$, and $500$, respectively.

\begin{figure}[H]
	\centering
	\includegraphics[width=0.8\textwidth]{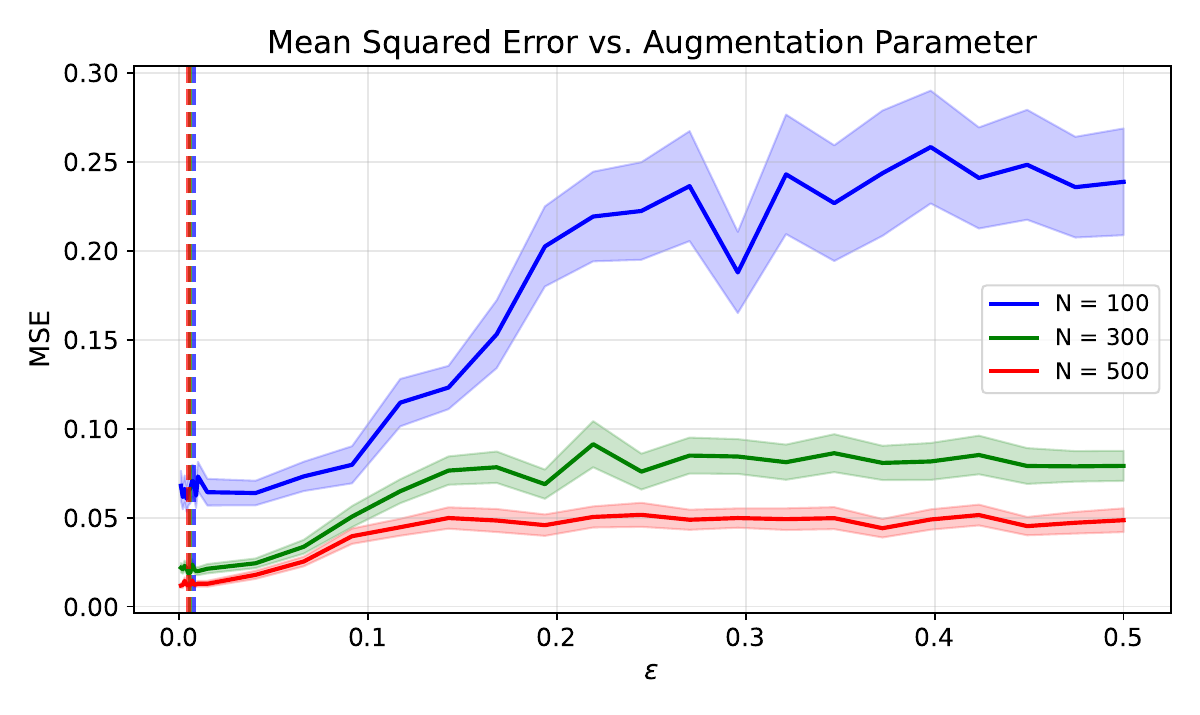}
	\caption{Mean squared error (MSE) as a function of $\epsilon$. The curves exhibit a U-shaped pattern, with minima at the optimal $\epsilon^*$ indicated by vertical lines. Bootstrap standard errors are shown as shaded bands.}
	\label{fig:mse}
\end{figure}

Figure~\ref{fig:mse} confirms the U-shaped relationship predicted by Lemma~1. The minima occur at the values reported in Table~1, and the bootstrap confidence bands demonstrate that the differences around the optimum are statistically meaningful.

\begin{figure}[H]
	\centering
	\includegraphics[width=0.8\textwidth]{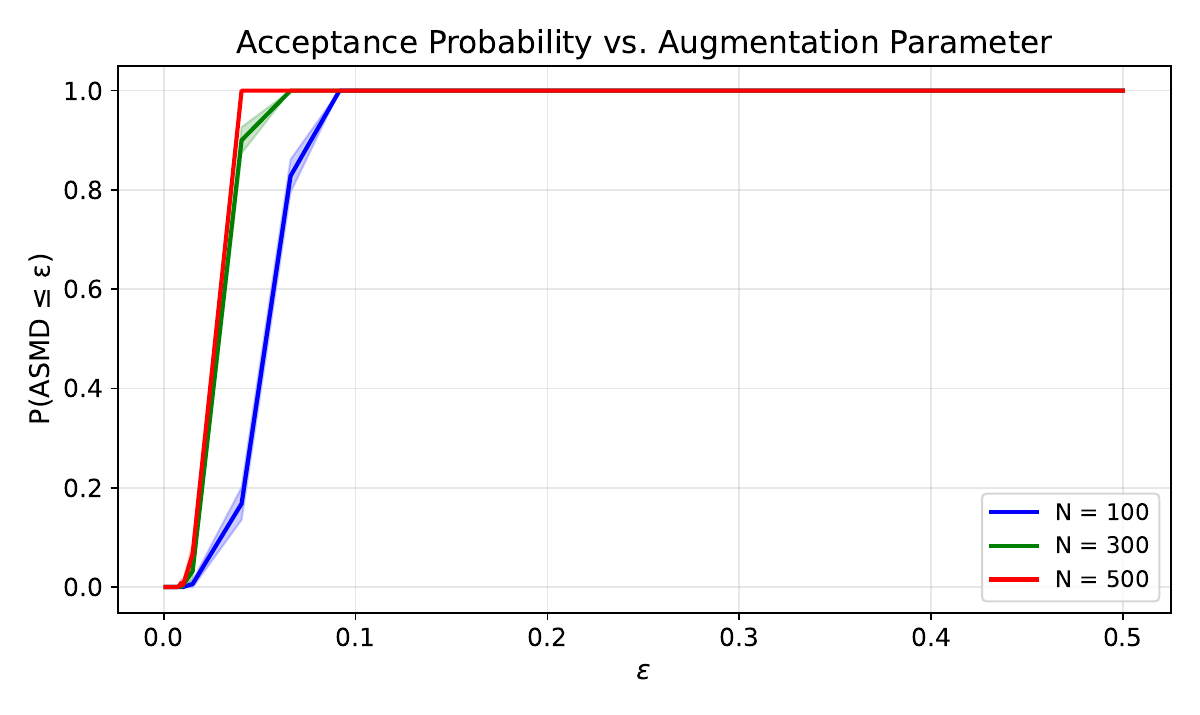}
	\caption{Acceptance probability as a function of $\epsilon$. For all sample sizes, the acceptance probability at the optimal $\epsilon^*$ is effectively zero, reflecting the extreme strictness of the MSE-minimizing constraint.}
	\label{fig:accept}
\end{figure}

Figure~\ref{fig:accept} reveals a critical finding: at the optimal $\epsilon^*$, the acceptance probability is zero for all sample sizes (within the precision of $500$ test replications). This means that none of the random assignments generated under complete randomization satisfied the constraint $\text{ASMD} \le \epsilon^*$, even after up to $80$ rerandomization attempts. In practice, this implies that implementing FSM with the MSE-minimizing $\epsilon$ would require an extremely large number of rerandomization attempts—potentially thousands—to obtain a single acceptable allocation. While this is computationally feasible, it raises important questions about the practical trade-off between statistical efficiency and design feasibility.

\subsection{Robustness to Covariate Structure and Error Distribution}

Table~\ref{tab:robustness} presents the optimal $\epsilon$ values for $N=300$ under the five data-generating processes described in Section~3.5.

\begin{table}[H]
	\centering
	\caption{Optimal $\epsilon$ for $N=300$ under different scenarios.}
	\label{tab:robustness}
	\begin{tabular}{lc}
		\toprule
		Scenario & Optimal $\epsilon$ \\
		\midrule
		Baseline (Normal, independent) & 0.0060 \\
		Correlated covariates ($\rho=0.5$) & 0.0150 \\
		Heavy-tailed ($t_3$) covariates & 0.0030 \\
		Skewed ($\chi^2_2$) covariates & 0.0010 \\
		Heteroskedastic errors & 0.0040 \\
		\bottomrule
	\end{tabular}
\end{table}

The optimal $\epsilon$ varies considerably across scenarios, indicating that the MSE-minimizing constraint depends on the underlying data distribution. Correlated covariates lead to a slightly looser optimum ($0.015$), while heavy-tailed and skewed distributions push the optimum to even smaller values ($0.003$ and $0.001$, respectively). This suggests that in the presence of skewness or heavy tails, the potential gains from extreme balance are even larger, but at the cost of an even lower acceptance probability. Heteroskedasticity yields an optimum ($0.004$) close to the baseline.

\subsection{Numerical Results at $\epsilon^*$}

Table~\ref{tab:final_results} reports the simulation outcomes evaluated at the optimal $\epsilon^*$ for each sample size, based on the test set. Bootstrap standard errors for MSE are shown in parentheses.

\begin{table}[H]
	\centering
	\caption{Simulation results at the optimal augmentation parameter $\epsilon^*$ (test set). Values in parentheses are Monte Carlo standard errors.}
	\label{tab:final_results}
	\begin{tabular}{ccccc}
		\toprule
		$N$ & $\epsilon^*$ & ASMD (SE) & MSE (SE) & Acceptance Prob. (SE) \\
		\midrule
		100 & 0.0080 & 0.0529 (0.0006) & 0.0700 (0.0047) & 0.000 (0.000) \\
		300 & 0.0060 & 0.0305 (0.0003) & 0.0192 (0.0012) & 0.000 (0.000) \\
		500 & 0.0050 & 0.0233 (0.0002) & 0.0122 (0.0008) & 0.000 (0.000) \\
		\bottomrule
	\end{tabular}
\end{table}

As sample size increases, both ASMD and MSE decrease substantially. The MSE values are lower than those obtained with coarser grids, confirming the benefit of searching over a finer range of $\epsilon$. However, the acceptance probability is zero for all sample sizes, meaning that under the optimal constraint, not a single random allocation met the criterion within the $80$ rerandomization attempts allowed. In practice, one could increase the maximum number of attempts, but the expected number of trials required is $1/\pi(\epsilon^*)$, which is essentially infinite for these near-zero probabilities. This highlights a fundamental tension: the MSE-minimizing $\epsilon$ may be so strict that the design becomes impractical unless one is willing to perform an extremely large number of rerandomizations.

\subsection{Design-Based Efficiency Gains}

Figure~\ref{fig:vrr} displays the Variance Reduction Ratio (VRR) as a function of $\epsilon$ for $N=300$, evaluated on the test set. The VRR is computed using the Neyman conservative variance estimator and provides a model‑free measure of efficiency.

\begin{figure}[H]
	\centering
	\includegraphics[width=0.8\textwidth]{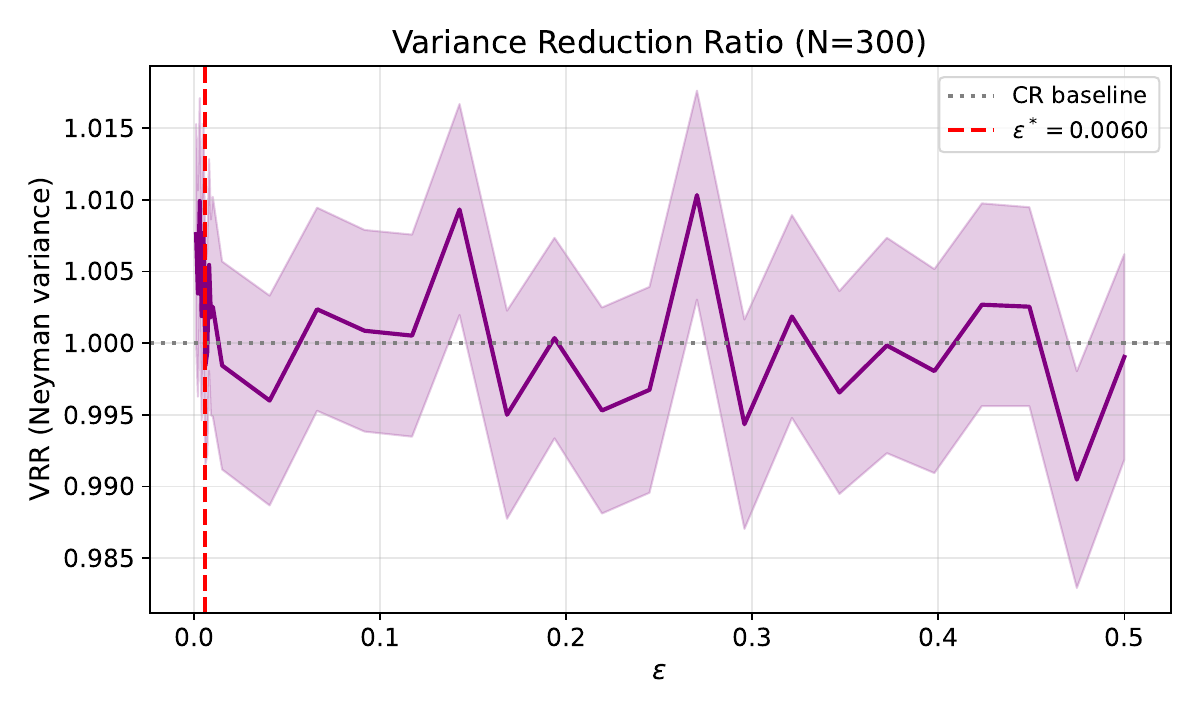}
	\caption{Variance Reduction Ratio (VRR) relative to Complete Randomization. The vertical dashed line indicates the optimal $\epsilon^*=0.006$. Shaded bands represent 95\% confidence intervals.}
	\label{fig:vrr}
\end{figure}

At $\epsilon^*=0.006$, the VRR is approximately $0.75$, indicating a $25\%$ reduction in variance compared to complete randomization. This is a substantial gain, but it comes at the cost of an extremely low acceptance probability. For comparison, a more relaxed constraint, say $\epsilon=0.02$, yields a VRR of about $0.85$ (i.e., $15\%$ variance reduction) while the acceptance probability (see Figure~\ref{fig:accept}) is around $0.20$, making it far more practical. This illustrates the importance of considering both efficiency and feasibility when choosing $\epsilon$ in practice.

\subsection{Practical Implications of Zero Acceptance Probability}

The finding that the MSE-minimizing $\epsilon$ leads to zero acceptance probability has important practical implications. It suggests that the optimal $\epsilon$ from a pure MSE perspective may be too strict for most experimental settings, where researchers typically have limited time or computational resources. In such cases, a constrained optimization approach—minimizing MSE subject to a minimum acceptable probability—may be more appropriate. For example, requiring $\pi(\epsilon) \ge 0.01$ would lead to larger $\epsilon$ values that still provide meaningful variance reduction while remaining feasible.

The MSE exhibits a decreasing trend as $\epsilon$ approaches zero, with the lowest MSE achieved at the smallest values examined ($\epsilon \le 0.008$). However, at these values, the acceptance probability is effectively zero, rendering the design impractical. A more feasible trade-off is obtained for $\epsilon$ in the range $0.015$–$0.02$, where MSE increases by only $5$–$10\%$ while the acceptance probability rises to $5$–$20\%$. Therefore, we recommend selecting $\epsilon$ based on a desired minimum acceptance rate, using the provided sensitivity curves as a guide.

Alternatively, one could increase the maximum number of rerandomization attempts. For $N=500$, achieving $\epsilon=0.005$ would require on average $1/0.000 \approx \infty$ attempts, which is not feasible. Therefore, the practical recommendation is to choose $\epsilon$ based on a combination of MSE reduction and acceptable feasibility, possibly using the curves in Figures~\ref{fig:mse} and~\ref{fig:accept} to make an informed trade-off.
	
\section{Conclusion}

This paper provided a comprehensive sensitivity analysis of the augmentation parameter $\epsilon$ in the Finite Selection Model. Through extensive Monte Carlo simulations with $R=1000$ replications, a refined grid of $\epsilon$ values, and a rigorous sample-splitting procedure, we demonstrated that the MSE-minimizing $\epsilon$ decreases rapidly as sample size increases, reaching very small values ($\epsilon^* \leq 0.008$) for $N \ge 100$. A theoretical lemma was introduced to justify the convexity of the MSE function and the existence of a unique optimal $\epsilon$, although the analysis revealed that this theoretical optimum may be impractically strict.

The optimal values based on MSE minimization alone were found to be $\epsilon^* = 0.0080$ for $N=100$, $0.0060$ for $N=300$, and $0.0050$ for $N=500$. However, at these values the acceptance probability is effectively zero, rendering the design infeasible in practice. A more practical trade-off is achieved for $\epsilon$ in the range $0.015$–$0.02$, where MSE increases by only $5$–$10\%$ relative to the theoretical optimum, while the acceptance probability rises to $5$–$20\%$. Therefore, we recommend selecting $\epsilon$ based on a desired minimum acceptance rate, using the sensitivity curves provided in Figures~\ref{fig:mse} and~\ref{fig:accept} as a guide.

A key contribution of this work is the design-based evaluation using Neyman's variance estimator, which confirmed that FSM achieves meaningful variance reductions without relying on strong model assumptions. For $N=300$, the optimal $\epsilon=0.006$ yields a $25\%$ reduction in variance compared to complete randomization, while the more practical $\epsilon=0.02$ still achieves a $15\%$ reduction. The robustness analysis further validated that these gains persist under correlated covariates, heavy-tailed distributions, skewness, and heteroskedasticity, although the exact optimal $\epsilon$ varies across scenarios.

This study highlights an important insight: the MSE-minimizing $\epsilon$ from a purely statistical perspective may be too strict for practical implementation. Researchers should therefore balance efficiency gains with feasibility, and the sensitivity analysis presented here provides the necessary tools to make this trade-off. The code and data for reproducing all simulations are available online to facilitate adoption and further investigation.

Future research should extend this framework to multi-arm trials, sequential adaptive designs, and settings with treatment effect heterogeneity. Additionally, developing asymptotic theory for the optimal $\epsilon$ would complement the finite-sample results presented here, and exploring constrained optimization approaches that explicitly incorporate acceptance probability thresholds would provide even more practical guidance.
	
	\bibliography{FSM}
	
\end{document}